\documentclass[sigconf, anonymous=false]{acmart}
\usepackage{algorithm}
\usepackage{algorithmic}
\usepackage{amsfonts}
\usepackage{amsthm}
\usepackage{balance}
\usepackage[title]{appendix}
\newtheorem*{proposition}{Proposition}

\DeclareMathOperator*{\argmax}{arg\,max}

\AtBeginDocument{%
  \providecommand\BibTeX{{%
    \normalfont B\kern-0.5em{\scshape i\kern-0.25em b}\kern-0.8em\TeX}}}

\copyrightyear{2023} 
\acmYear{2023} 
\setcopyright{acmlicensed}\acmConference[AIES '23]{AAAI/ACM Conference on
AI, Ethics, and Society}{August 8--10, 2023}{Montréal, QC, Canada}
\acmBooktitle{AAAI/ACM Conference on AI, Ethics, and Society (AIES '23),
August 8--10, 2023, Montréal, QC, Canada}
\acmPrice{15.00}
\acmDOI{10.1145/3600211.3604660}
\acmISBN{979-8-4007-0231-0/23/08}

%
%




\sloppy
\begin{document}

\title{Mitigating Voter Attribute Bias for Fair Opinion Aggregation}

\author{Ryosuke Ueda}
\orcid{0009-0001-3892-9857}
\affiliation{%
  \institution{Kyoto University}
  \city{Kyoto}
  \country{Japan}
}
\email{rueda@ml.ist.i.kyoto-u.ac.jp}

\author{Koh Takeuchi}
\affiliation{%
  \institution{Kyoto University}
  \city{Kyoto}
  \country{Japan}
}
\email{takeuchi@i.kyoto-u.ac.jp}

\author{Hisashi Kashima}
\affiliation{%
  \institution{Kyoto University}
  \city{Kyoto}
  \country{Japan}
}
\email{kashima@i.kyoto-u.ac.jp}


\begin{abstract}
The aggregation of multiple opinions plays a crucial role in decision-making, such as in hiring and loan review, and in labeling data for supervised learning. 
Although majority voting and existing opinion aggregation models are effective for simple tasks, they are inappropriate for tasks without objectively true labels in which disagreements may occur. 
In particular, when voter attributes such as gender or race introduce bias into opinions, the aggregation results may vary depending on the composition of voter attributes. 
A balanced group of voters is desirable for fair aggregation results but may be difficult to prepare. 
In this study, we consider methods to achieve fair opinion aggregation based on voter attributes and evaluate the fairness of the aggregated results.

To this end, we consider an approach that combines opinion aggregation models such as majority voting and the Dawid and Skene model (D\&S model) with fairness options such as sample weighting. 
To evaluate the fairness of opinion aggregation, probabilistic soft labels are preferred over discrete class labels. 
First, we address the problem of soft label estimation without considering voter attributes and identify some issues with the D\&S model. 
To address these limitations, we propose a new Soft D\&S model with improved accuracy in estimating soft labels. 
Moreover, we evaluated the fairness of an opinion aggregation model, including Soft D\&S, in combination with different fairness options using synthetic and semi-synthetic data. 
The experimental results suggest that the combination of Soft D\&S and data splitting as a fairness option is effective for dense data, whereas weighted majority voting is effective for sparse data. 
These findings should prove particularly valuable in supporting decision-making by human and machine-learning models with balanced opinion aggregation.
\end{abstract}

\begin{CCSXML}
<ccs2012>
   <concept>
       <concept_id>10003120</concept_id>
       <concept_desc>Human-centered computing</concept_desc>
       <concept_significance>500</concept_significance>
       </concept>
   <concept>
       <concept_id>10010147.10010257</concept_id>
       <concept_desc>Computing methodologies~Machine learning</concept_desc>
       <concept_significance>500</concept_significance>
       </concept>
   <concept>
       <concept_id>10010147.10010257.10010258.10010259.10010263</concept_id>
       <concept_desc>Computing methodologies~Supervised learning by classification</concept_desc>
       <concept_significance>300</concept_significance>
       </concept>
 </ccs2012>
\end{CCSXML}

\ccsdesc[500]{Human-centered computing}
\ccsdesc[500]{Computing methodologies~Machine learning}
\ccsdesc[300]{Computing methodologies~Supervised learning by classification}

\keywords{opinion aggregation, fairness, human computation, crowdsourcing, decision-making}


\maketitle

\section{INTRODUCTION}
Real-world decision-making processes such as recruitment, loan approval, and elections often require aggregations of opinions from multiple stakeholders such as interviewers, bankers, and the general public. 
Aggregating opinions on simple and objective questions such as determining the presence of a car in an image is relatively straightforward; this is often the case with supervised learning from crowdsourced labels~\cite{Snow2008-qo, Deng2009-ak, Ipeirotis2010-sr, Liu2012-iw, Rodrigues2013-vl, Bowman2015-xm}.
However, disagreements often occur, particularly in questions that rely on the subjective judgments of respondents, in which ground truth answers do not exist. 
Voters have different backgrounds and perspectives, which influence their evaluations and lead to disagreements and differences in opinions~\cite{Akhtar2020-qw, Kumar2021-lq, Chen2021-ir, Rottger2022-nt}.
This discrepancy can be further exacerbated by voter attribute bias, which is a bias in a set of opinions that depend on voter attributes such as gender and race resulting in biased aggregation results~\cite{Biswas2020-me, Kumar2021-lq, Sap2022-cg}.

Although a well-balanced panel of voters is ideal to fairly aggregate the opinions of various segments of the population, maintaining such a balanced composition is a major challenge.
The recent development of decision support methods based on prediction using artificial intelligence has attracted considerable attention~\cite{Duan2019-pw, Green2019-vy}, and raised some concerns about the possibility of social disadvantage resulting from unfair predictions caused by voter attribute bias.
Several studies have examined fairness in opinion aggregation~\cite{Li2020-iu, Biswas2020-me}, and a recent work has considered fairness with respect to voter attributes \cite{Gordon2022-yy}.
Therefore, in this study, we propose methods to fairly aggregate opinions from an unbalanced panel of voters, and a procedure to evaluate the fairness of the aggregation results by considering the degree of disagreement and the voter attributes.

To achieve fair opinion aggregation, we first consider models for subjective opinion aggregation.
Several aggregation models are well known, including majority voting and the Dawid and Skene model (D\&S model)~\cite{Dawid1979-fq}.
In cases without any definitive correct answer, considering the aggregation result (often treated as a latent true label in opinion aggregation models) as a soft label, which represents the proportion of opinions, would be preferable.
We show some limitations of the D\&S model in estimating soft labels and propose a novel Soft D\&S model that explicitly considers soft ground truth labels.
In addition, to address attribute bias, we combine opinion aggregation models with three fairness options, including sample weighting, data splitting, and GroupAnno~\cite{Liu2022-vk}.

We evaluated the fairness of various opinion aggregation models in combination with fairness options, using both synthetic and semi-synthetic data derived from real-world data. 
The experimental results indicate that combining of Soft D\&S and data splitting is effective for dense data, whereas weighted majority voting is suitable for sparse data.
This result could be attributed to the fact that Soft D\&S requires a parameter for each voter, which in turn requires a large enough dataset to estimate these parameters accurately.

The key contributions of this study are summarized as follows.
\begin{itemize}
    \item To the best of our knowledge, the present work is the first to propose methods to aggregate opinions fairly in terms of voter attributes and evaluate them quantitatively.
    \item We propose a new Soft D\&S model, an extension of D\&S, which addresses the issue of sharp output in the D\&S model and improves the estimation accuracy of soft labels.
\end{itemize}

\section{PROBLEM SETTING}
First, we formulate the general problem setting for opinion aggregation~\cite{Zhang2016-ql, Li2016-ux, Zheng2017-tp}.
Consider a group of human voters and a set of tasks that require appraisal, indexed as voter $i = 1, 2, \ldots, I$, and task $j = 1, 2, \ldots, J$, respectively.
Here, a task refers to an entity that requires evaluation, such as a single image in annotations used for image classification or a single applicant in recruitment.
Given that we focus on opinion aggregation with multi-class labels in this work, task $j$ is labeled with a $K$-class by multiple voters.
Let $X_{ij} \in \{-1, 1, 2, \ldots, K\}$ denote the label assigned by voters $i$ to task $j$, and let the $I \times J$ matrix with $X_{ij}$ as an element be denoted as $X$.
Note that a voter is not obliged to label all tasks, and $X_{ij} = -1$ for $(i, j)$ pairs where no label is provided.
In the general opinion aggregation problem, a discrete $K$-class label is assumed as the true label for each task. 
For example, in the task of determining whether a car is present in given image, the problem assumes that each task involves a possible binary label of "Yes" or "No".

Up to this point, we have presented the general problem setup for opinion aggregation. 
In this study, we introduce two changes particularly to address fairness concerns related to voter attributes.
The first change is that instead of assuming a discrete class label as the true label for each task, we assume continuous soft labels to handle more complex tasks in which disagreement among voters may be expected. 
The true soft label for each task $j$ is denoted by $Z_j = (Z_{j1}, Z_{j2}, \ldots, Z_{jK})^\top$, where $Z_{jk} \in [0, 1]$ represents the degree to which task $j$ belongs to class $k$. 
Because $Z_j$ is a soft label, it satisfies the constraints $\sum_{k=1}^K Z_{jk} = 1$ for all $j$. 
Note that the input $X_{ij}$ is a discrete label as same as the general setting.

The second change is that inputs include the representation of each voter attributes such as gender and race.
In particular, each voter takes a binary attribute $a_i \in \{0, 1\}$. 
Although considering more complex voter attributes would be preferable, we focus on a single binary attribute in this study for simplicity.
For tasks in which opinions are conflicted, a bias may be present in opinions due to voter attributes, which we refer to as voter attribute bias in this study.
Traditional methods for opinion aggregation tend to assign more weight to the opinions of a majority group of voters, leading to the aggregate results being dominated by the attributes of the majority of voters even though their opinions may be influenced by voter attribute bias.
In an ideal scenario, to ensure fairness in the aggregation, a balanced group of voters should employed in terms of gender, race, and other relevant attributes. 
However, assembling such a balanced group of voters is often challenging in practice.
Therefore, our goal is to estimate the aggregate results of the opinions of a ideally balanced group of voters, which are not directly observable in the real-world, from the opinions of an unbalanced group of voters as shown in Figure~\ref{fig:ideal_voters}.

To formalize the problem setup, let $p(a)$ denote the distribution of voter attributes for the ideal group. 
For example, in the case of binary attributes such as role (representing students as $a=0$ and teachers as $a=1$), if the ideal group comprises equal numbers of students and teachers, then $p(a=0) = p(a=1) = 0.5$.
In this study, the true soft label $Z$ is defined as a soft label determined by majority voting when a sufficient number of voters whose voter attribute distribution follows $p(a)$ are present. 
However, in practice, the actual observed voter population may not follow $p(a)$, and the number of voters may be limited. 
Thus, in this study, we aim to estimate $Z$ from the input label $X$ and voter attributes $\{a_i\}_{i=1}^I$.

\begin{figure}
    \centering
    \includegraphics[width=\linewidth, bb=0 356 594 520]{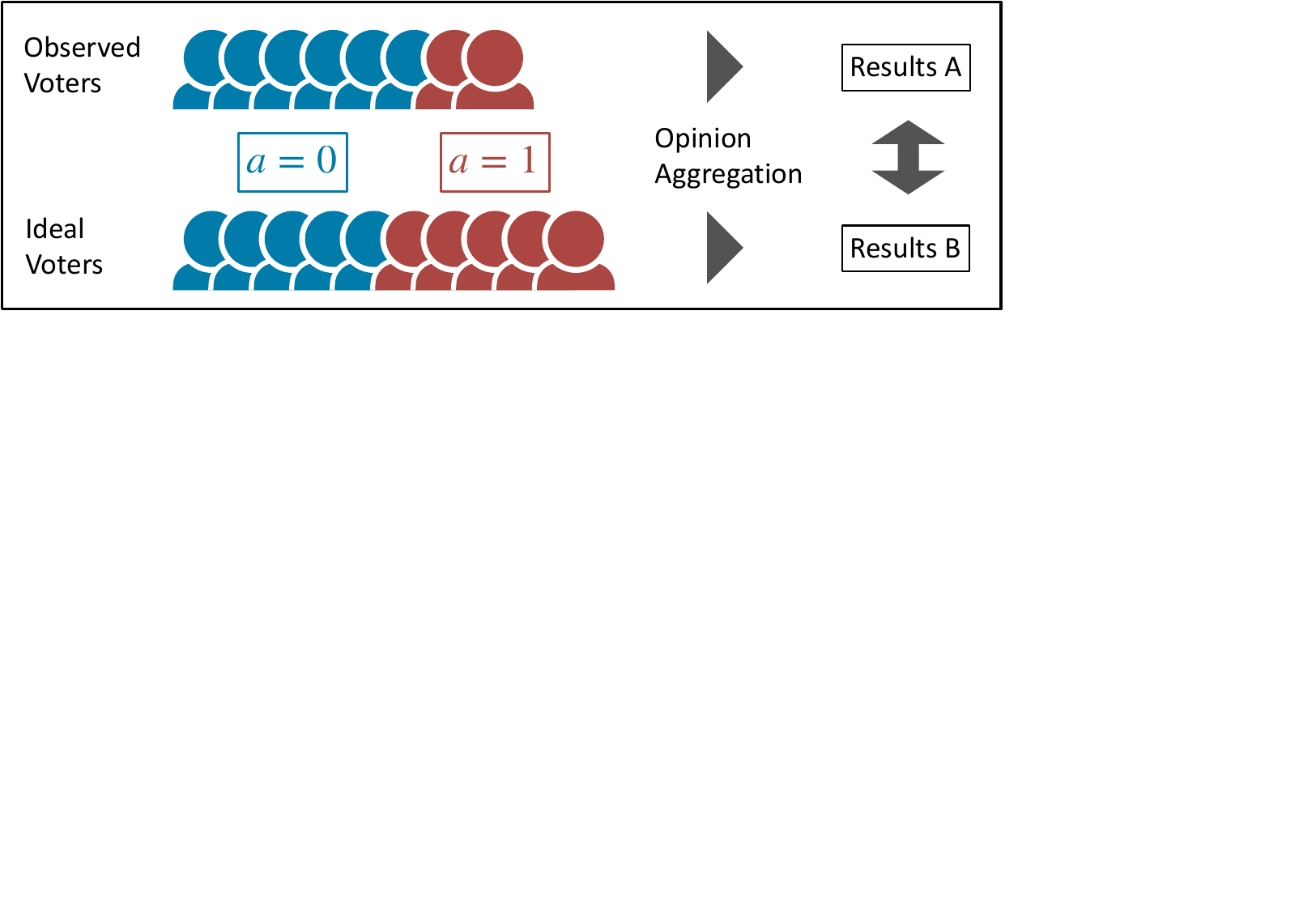}
    \caption{Our goal is to perform fair opinion aggregation with respect to voter attribute $a\in \{0,1\}$, i.e., opinion aggregation in an ideal population that has equal numbers of voters with $a=0$ and $a=1$. When the distribution of the observed voter attribute deviates from that of the ideal population, there can be a systematic discrepancy between opinion aggregation results A and B. We want to estimate the fair result B from observed voter labels alone.}
    \label{fig:ideal_voters}
\end{figure}
\section{RELATED WORK} \label{sec:related}
\subsection{Voter Attribute in Opinion Aggregation}
Several existing studies have examined voter attributes in the context of opinion aggregation. 
First, Kazai et al.~\cite{Kazai2012-ab} investigated the relationship between voter attributes such as location, gender, and personality traits and the quality of labels in crowdsourcing. 
They found a strong correlation between label quality and the geographic location of voters, particularly those located in the United States, Asia, and Europe. 
Second, Liu et al.~\cite{Liu2022-vk} proposed a model called GroupAnno which incorporated voter attributes into an opinion aggregation framework. 
Their work addressed the issue of estimating parameters for voters with limited responses and improved the accuracy of aggregation results. 
In the present study, we draw inspiration from GroupAnno to enhance fairness with respect to voter attributes, rather than improving accuracy. 
GroupAnno is originally based on the Learning From Crowds model (which is derived from the D\&S model~\cite{Dawid1979-fq}); as discussed in Section~\ref{subsubsec:DS}, the D\&S model suffers from problems with soft label estimation. 
In addition to evaluating fairness, we address the problem of soft label estimation using GroupAnno.

Another study of interest also explored fairness in opinion aggregation through the use of voter attributes. 
Gordon et al.~\cite{Gordon2022-yy} investigated the problem of opinion aggregation with a focus on social minority voters. 
They utilized an annotated dataset~\cite{Kumar2021-lq} that included voter attributes such as race, gender, age, and political attitudes to measure the toxicity of social media comments. 
Their study considered more complex voter attributes than the present study, including multiple pairs of attributes such as race (including Hispanic and Native Hawaiian), gender (including non-binary), and political attitudes. 
They first trained a deep learning regression model designed to consider the textual features of comments, voter attributes, and voter IDs to estimate a five-level toxicity label. 
Using this model, they generated toxicity labels for any comment made by a virtual voter with arbitrary voter attributes, including social minorities, and aggregated their opinions. 
While deep learning models have high expressive power and can handle complex voter attributes, there is a concern that the labels are generated by less interpretable models instead of humans.
In contrast, we propose a novel approach based on a traditional opinion aggregation model that does not rely on text features of tasks and is relatively more interpretable.
The experiments by Gordon et al. focus on the accuracy of estimating $X_{ij}$, the labels for each voter, whereas we directly assess the fairness of the aggregation results $Z_j$ by considering the balanced or unbalanced attributes of voters.

\subsection{General Opinion Aggregation Models}
Opinion aggregation has become a significant area of research  with the advent of crowdsourcing platforms such as Amazon Mechanical Turk and the growing need for labeling in machine learning. 
One of the main challenges in opinion aggregation is that of ensuring quality control, because voters are human~\cite{Li2016-ux}. 
This challenge is particularly acute when labeling is outsourced through crowdsourcing, where assessing the ability and motivation of voters is more difficult due to the online nature of the process, which leads to considerable variability in the quality of the generated labels. 
To address this issue, numerous opinion aggregation models have been proposed to capture variance in label quality~\cite{Zheng2017-tp}.

Dawid and Skene proposed an opinion aggregation model that utilizes a confusion matrix to model voters~\cite{Dawid1979-fq}. 
The D\&S model applies an EM algorithm to iteratively optimize the voter confusion matrix and the true labels. 
Further details about the model are provided in Section~\ref{subsubsec:DS}. 
Several opinion aggregation models based on the D\&S model have been introduced to date. 
In this study, we present the most representative models. 
The Learning From Crowds (LFC) model~\cite{Raykar2010-io} learns a classifier with task features and voter labels as input and can also be used as an opinion aggregation model when task features are not available. 
In this case, the model is an extension of the D\&S model that maximizes the posterior probability by introducing a Dirichlet prior distribution for the confusion matrix and true label estimates. 
In contrast, the Bayesian Classifier Combination (BCC) model~\cite{Kim2012-wh} aggregates multiple classifiers and can be considered an opinion aggregation model when the classifiers are replaced with human voters. 
In~\cite{Kim2012-wh}, Kim and Ghahramani proposed Independent BCC (IBCC) and Dependent BCC (DBCC) assuming the opinions of independent and correlated, respectively. 
IBCC extends the D\&S model to Bayesian estimation and introduces a Dirichlet prior distribution for the confusion matrix and true label estimates, similarly to LFC. 
Community BCC (CBCC)~\cite{Venanzi2014-fx} model is designed to address the ineffectiveness of IBCC for cases in which labels are scarce, and it extends IBCC by grouping similar voters. 
Bayesian estimation is conducted using the expectation propagation method, assuming a graphical model in which each voter belongs to a single group and the confusion matrices of voters in the same group have similar values. 
Due to the high computational cost of the DBCC when the number of voters is large, Enhanced BCC (EBCC)~\cite{Li2019-ebcc} was developed to reduce computational complexity and incorporate correlation among voters in the model.

Some alternative approaches to opinion aggregation models that do not use a confusion matrix have also been proposed.
ZenCrowd~\cite{Demartini2012-gy} uses the percentage of correct responses per voter as a real number in the interval $[0, 1]$ rather than a confusion matrix. 
The correct response rate and true label per task are estimated using the EM algorithm.
GLAD~\cite{Whitehill2009-rk} was inspired by item response theory~\cite{Linden1997-rc} and models the ability of a voter $i$ and the difficulty of a task $j$ with one-dimensional parameters $\alpha_i$ and $\beta_j$, respectively. 
They assume the probability that $X_{ij}$ matches the true label to be $\sigma(\alpha_i \beta_j)$ using the sigmoid function, and perform maximum likelihood estimation using the EM algorithm.
The model by Zhou et al.~\cite{Zhou2012-iw} assumed a probability distribution of labels for each pair of voter and task. 
It modeled not only the ability of voters but also the difficulty of tasks and could also represent the interaction between voters and tasks.
Bayesian Weighted Average (BWA)~\cite{Li2019-bwa} assumes a normal distribution for the process of generating discrete binary labels. 
The label $X_{ij}$ is assumed to follow $\mathcal{N}(z_j, v_i^{-1})$, and $z_j, v_i$ are optimized in the framework of Bayesian inference.
The aggregation result is 1 if $z_j$ is greater than 0.5, and 0 otherwise.
It can also be extended to multi-class classification.

\subsection{Fairness in Opinion Aggregation}
Recently, the focus on fairness in machine learning has been increasing, particularly in opinion aggregation models that are commonly used to generate training data. 
While our study addressed the issue of fair opinion aggregation with respect to voter attributes, Li et al.~\cite{Li2020-iu} addressed fairness with respect to task attributes in cases where the task is performed by a human being. 
In particular, they investigated fairness with respect to gender and race of defendants in the United States in the context of a recidivism prediction task using the publicly available dataset~\cite{Dressel2018-jq}. 
In their work, they employed Statistical Parity~\cite{Dwork2012-ie} as a fairness measure, which is often used for fairness in classification problems, and proposed an opinion aggregation model that incorporated such constraints to prevent unfairly high or low labeling of recidivism risk based on defendant attributes. 
However, our study differs significantly in that it focuses on fairness with respect to voter attributes rather than tasks attributes. 

Notably, some studies have also explored modifying experimental designs to improve fairness in opinion aggregation. 
For example, in the recidivism prediction dataset for U.S. defendants~\cite{Dressel2018-jq} mentioned earlier, a subset of 1,000 individuals was randomly sampled from a larger dataset of 7,214 defendants. 
Biswas et al.~\cite{Biswas2020-me} took a similar approach and sampled 1,000 individuals from the same dataset, with 250 individuals for each of four groups, including African-American recidivists, African-American non-recidivists, Caucasian recidivists, and Caucasian non-recidivists. 
They then collected a new dataset with an equal number of black and white voters and used the Equalized Odds~\cite{Hardt2016-me} fairness measure to assess fairness with respect to task attributes. 
Their findings suggest that the voter labels were fairer in the newly created dataset than in the original dataset, and a classification model trained on the dataset with balanced defendant attributes was also fairer. 
While their study used voter attributes, their assessment of fairness was limited to the attributes of the task, i.e., the defendant.

\subsection{Soft Labels for Machine Learning}
Soft labels expressing uncertainty or disagreement among voters, can provide additional information and potentially enhance the accuracy of machine learning models~\cite{Uma2020-qb, Peterson2019-ll}. 
Multi-task learning in which soft label estimation is performed as an auxiliary task, has shown improved accuracy compared to models trained solely on hard labels in some natural language processing tasks~\cite{Fornaciari2021-pv}. 
Soft labels are particularly important in tasks where voter disagreement is expected, such as comment toxicity classification. 
Because hard labels are not suitable for evaluating such problems, Gordon et al.~\cite{Gordon2021-rq} proposed a method of sampling multiple hard labels with a soft label for each comment.
They also proposed a method to estimate soft labels using singular value decomposition to eliminate noise. 
Davani et al.~\cite{Davani2022-ry} demonstrated the usefulness of multi-task learning to estimate labels per voter using an annotated dataset for subjective tasks in natural language processing. 
They compared models trained on data previously aggregated into hard labels by majority voting to models trained by multi-task learning per annotator without opinion aggregation and found that the latter achieved equal or better accuracy. 
\section{PROPOSED METHODS} \label{sec:proposed}
To estimate unbiased soft labels, we combine the opinion aggregation model with the fairness option.
Opinion aggregation models take $X$ as input and produces a soft label $\hat{Z}$ as output. Some examples of opinion aggregation models include Majority Voting (MV) and the D\&S model, which are described below.
Because the input of the opinion aggregation model does not include voter attributes, the resulting soft labels obtained from this model alone are not unbiased.
First, we identify a problem in soft label estimation using D\&S and propose an extension of D\&S called Soft D\&S that addresses this problem.

The fairness option is a method to increase fairness in combination with an opinion aggregation model. 
In this study, we adopt three fairness options, including sample weighting, data splitting, and GroupAnno. 
The fairness of each pair of an opinion aggregation model and a fairness option presented in this section were verified through experiments as described in Section \ref{sec:experiments}.

\subsection{Opinion Aggregation Models} \label{subsec:agg_models}
As mentioned earlier, we first discuss opinion aggregation models designed to estimate soft labels without considering fairness with respect to voter attributes. 
We introduce the simplest opinion aggregation model MV, and then introduce the D\&S model, which takes voter reliability into account. 
We then identify a problem with the ability of the D\&S model to estimate soft label accurately in certain situations and propose a modified version of the model called ``Soft D\&S'' that addresses this issue.

\subsubsection{Majority Voting (MV)}
MV is a simple model that computes the ratio of labels assigned to each class by the voters, which is then directly output as a soft label. 
For example, consider a binary classification task $j$ in which 6 voters assign class 1 and 4 voters assign class 2. 
The soft label estimated by majority voting is $\hat{Z}_j = (0.6, 0.4)^\top$. 
This estimate can be formulated as follows:
\[
\hat{Z}_{jk} = \frac{\sum_i I(X_{ij} = k)}{\sum_i I(X_{ij} \neq -1)} .
\]
Figure \ref{fig:graphical_MV} shows the graphical model of MV, where 
\begin{equation}
p(X_{ij} \mid Z_j) = \mathrm{Categorical}(Z_j) \label{eq:mv}.
\end{equation}
Note that $\mathrm{Categorical}(\cdot)$ is a categorical distribution, which coincides with the Bernoulli distribution when $K=2$. 
To summarize, MV is an algorithm to estimate the parameter $Z_j$ of the categorical distribution assuming the graphical model represented in Figure \ref{fig:graphical_MV} and Equation (\ref{eq:mv}).

\begin{figure}
    \centering
    \includegraphics[width=0.4\linewidth, bb=0 200 260 360, clip=true]{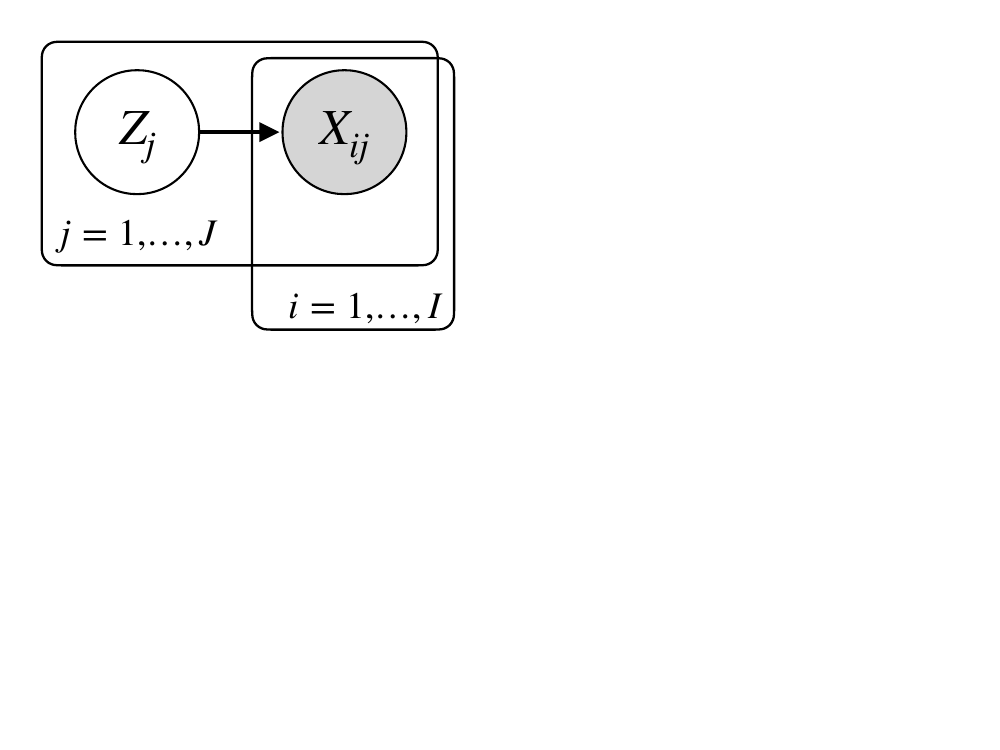}
    \caption{Graphical model of MV. Only shaded variables are observed.}
    \label{fig:graphical_MV}
\end{figure}

\subsubsection{Dawid and Skene Model (D\&S)} \label{subsubsec:DS}
\begin{figure}
    \centering
    \includegraphics[width=0.65\linewidth, bb=0 346 497 540, clip=true]{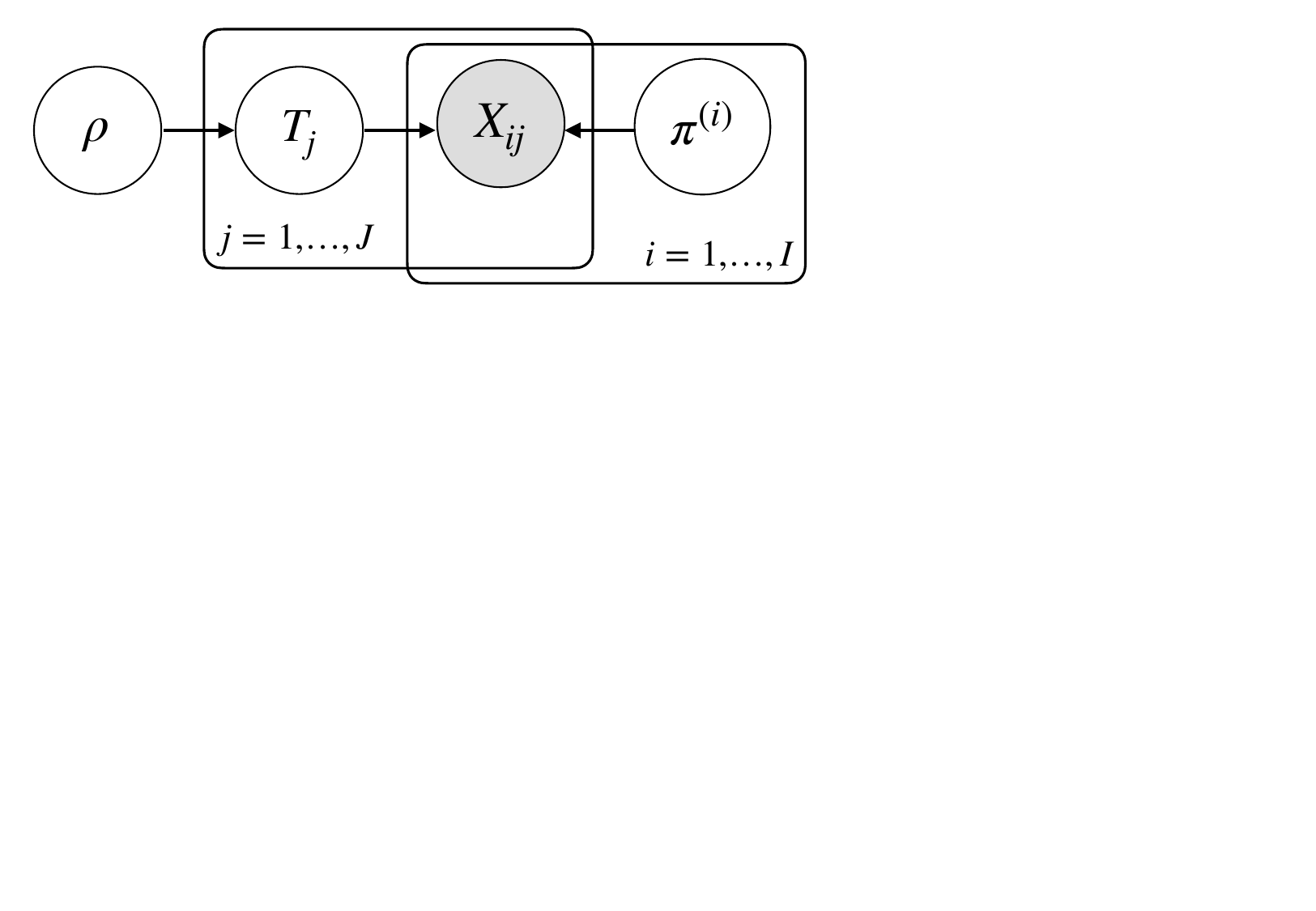}
    \caption{Graphical model of D\&S. Only shaded variables are observed.}
    \label{fig:graphical_DS}
\end{figure}

We then introduce the D\&S model~\cite{Dawid1979-fq}, which is a more sophisticated approach to opinion aggregation than MV. 
The D\&S model incorporates a confusion matrix for each voter, which is optimized using an EM algorithm. 

Figure \ref{fig:graphical_DS} shows the graphical model of D\&S, where $T_j$ is the true label of task $j$. 
Notably, in the D\&S model, $T_j$ assumes discrete labels, meaning each task $j$ has only one class label $T_j \in \{1, \ldots, K\}$. 
The confusion matrix for each voter $i$ is denoted as $\pi^{(i)} \in \mathbb{R}^{K \times K}$. 
In particular, for any $k, l \in \{1, \ldots, K\}$ and $j \in \{1, \ldots, J\}$, the confusion matrix element is defined as
\[\pi^{(i)}_{kl} = p(X_{ij} = l \mid T_j = k).\] 
For example, the confusion matrix of the best voter is $\pi^{(i)} = E_K$ (where $E_n$ refers to the $n \times n$ identity matrix), and this voter always labels the true class. 
In contrast, the confusion matrix of a random voter has all elements $1/K$.
Furthermore, a parameter $\rho = (\rho_1, \ldots, \rho_K)^\top$ represents the prior distribution such that $T_j \sim \mathrm{Categorical}(\rho)$ for any $j$.

Based on the assumptions made up to this point, a lower bound for the log-likelihood $\mathcal{L}$ can be derived when $X$ is observed, as given by the following inequality:
\begin{align}
\mathcal{L} &= \ln p(X \mid \pi, \rho) \notag \\
&= \ln \sum_T p(T \mid \rho) p(X \mid T, \pi) \notag \\
&= \sum_j \ln \sum_k \frac{q(T_j = k)}{q(T_j = k)} \rho_k \prod_{i: O_{ij} = 1} p(X_{ij} \mid T_j, \pi^{(i)}) \notag \\
&\geq \sum_j \sum_k q(T_j = k) \ln \frac{\rho_k}{q(T_j = k)} \notag \\
&\quad + \sum_j \sum_kq(T_j = k) \sum_{i: O_{ij} = 1} \ln p(X_{ij} \mid T_j, \pi^{(i)}), \label{eq:dawid_lb}
\end{align}
where $q(T_j = k)$ represents an arbitrary distribution of the discrete latent variable $T_j$, which corresponds to the soft labels.

This lower bound is maximized using the EM algorithm. 
During the E-step, the parameters $\rho$ and $\pi$ are fixed, and $q(T_j=k)$ is updated to maximize the lower bound. 
During the M-step, $q(T_j=k)$ is fixed, and the parameters $\rho$ and $\pi$ are updated to maximize the lower bound. 
In the original D\&S model, after the EM algorithm converges, the discrete label $T_j$ is estimated by comparing the obtained $q(T_j=k)$ with a threshold value.

\subsubsection{Sharpness of D\&S Output}
D\&S can estimate soft labels by utilizing $X$ as input and generating $q(T_j=k)$ as output. Nonetheless, optimization using the EM algorithm may lead to the concentration of $q(T_j=k)$ around either 0 or 1, thus producing estimates with high sharpness. 
Figure \ref{fig:MV-vs-DS} demonstrates the experimental results on synthetic data and a comparison with those of MV.

\begin{figure}
    \centering
    \includegraphics[width=0.99\linewidth]{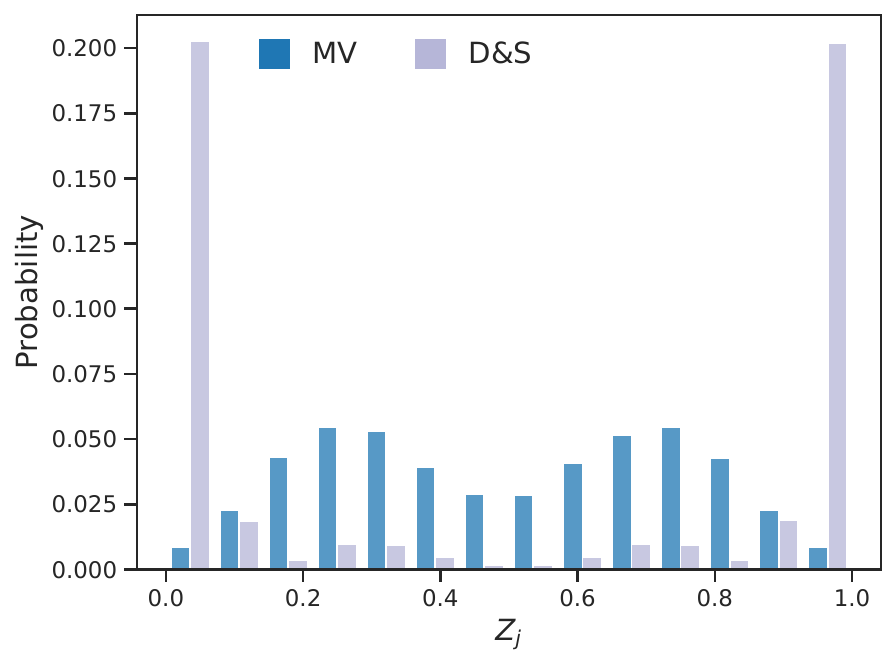}
    \caption{
    Soft label estimation results for D\&S and MV models. 
    We utilized synthetic data with $K=2$ classes, and 15 voters answering 1000 tasks. 
    The true soft label for 500 of the 1000 tasks was $(0.3, 0.7)^\top$ and $(0.7, 0.3)^\top$ for other 500 tasks. 
    We used the graphical model in Figure \ref{fig:graphical_MV} to generate the labels $X$. 
    The data generation and estimation was repeated 100 times.
    We show the distribution of estimated class 1 soft label, where D\&S tends to estimate a sharper distribution than MV.
    }
    \label{fig:MV-vs-DS}
\end{figure}

The EM algorithm produces sharp estimates due to the E-step, in which the update for $q(T_j=k)$ is defined as:
\[
q(T_j = k) \propto \rho_k \prod_{i, l} {\pi^{(i)}_{kl}}^{I(X_{ij} = l)}.
\]
To illustrate this issue, we consider a scenario, in which ten individuals vote on a single task in a binary classification task ($K=2$), and the confusion matrix for the D\&S model across all voters is
\[
\begin{bmatrix}
    0.9 & 0.1 \\
    0.1 & 0.9
\end{bmatrix}.
\]
For example, assuming that 6 out of 10 voters cast their votes for class 1 and the other 4 for class 2, with a confusion matrix of the D\&S model as previously mentioned, we can compute $q(T_j=1)$ and $q(T_j=2)$ using the E-step of the D\&S model (assuming that the prior distribution $\rho$ is uniformly distributed), as follows:
\begin{align*}
    q(T_j = 1) &= \frac{0.9^6 0.1^4}{0.9^6 0.1^4 + 0.9^4 0.1^6} \approx 0.988, \\
    q(T_j = 2) &= \frac{0.9^4 0.1^6}{0.9^6 0.1^4 + 0.9^4 0.1^6} \approx 0.012.
\end{align*}
The estimates obtained from the D\&S model are much sharper than the MV estimate $(0.6, 0.4)^\top$, which highlights the difficulty of the D\&S model in detecting voter attribute bias in scenarios where such discrepancies occur.

\subsubsection{Soft D\&S}
\begin{figure}
    \centering
    \includegraphics[width=0.6\linewidth, bb=0 137 331 360, clip=true]{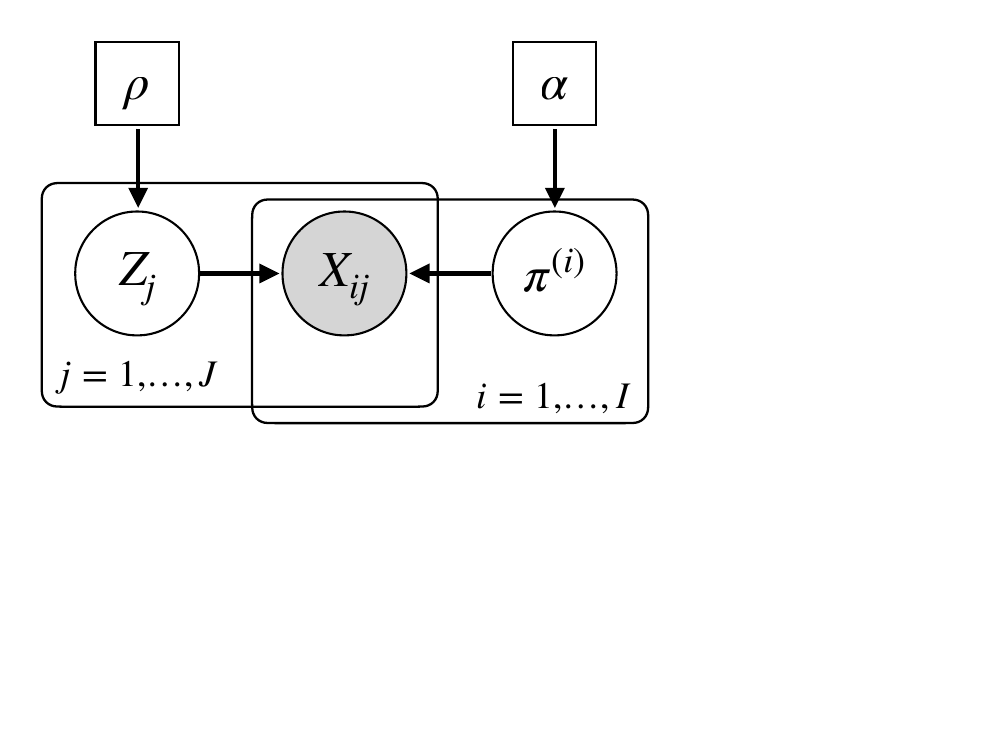}
    \caption{Graphical model of Soft D\&S. Only shaded variables are observed; variables surrounded by squares are hyperparameters.}
    \label{fig:graphical_SoftDS}
\end{figure}
To address the issue mentioned above, we propose a solution called the Soft D\&S model, which is an extension of the D\&S model that estimates soft labels. 
The Soft D\&S model is illustrated by the graphical model shown in Figure \ref{fig:graphical_SoftDS}. 
In the proposed model, we introduce the parameter $Z$ as a soft label where $Z_j = (Z_{j1}, \ldots, Z_{jK})^\top$ for any task $j$ and satisfies $\sum_{k=1}^K Z_{jk} = 1$ and $Z_{jk} \geq 0$ for any $k$. 
Additionally, for each voter $i$, we define the parameter $\pi^{(i)}$, which corresponds to the confusion matrix of the D\&S model. 
$\pi^{(i)}$ is a $K \times K$ matrix that satisfies $\sum_{l=1}^K \pi^{(i)}_{kl} = 1$ for any $k$ and $\pi^{(i)}_{kl} \geq 0$ for any $k, l$. 
We also define a Dirichlet prior distribution for each $i, j$ using hyperparameters $\alpha \in \mathbb{R}^{K \times K}$ and $\rho \in \mathbb{R}^K$ as follows.
\begin{itemize}
    \item $\pi^{(i)}_k \sim \mathrm{Dirichlet}(\alpha_k) \quad \left(\pi^{(i)}_k = (\pi^{(i)}_{k1}, \pi^{(i)}_{k2}, \ldots, \pi^{(i)}_{kK})^\top \right) .$
    \item $Z_j \sim \mathrm{Dirichlet}(\rho).$
\end{itemize}
The generative model for the label $X_{ij}$ with these parameters is defined as follows:
\begin{align}
    &p(X, Z, \pi \mid \alpha, \rho) = p(X \mid \pi, Z) p(\pi \mid \alpha) p(Z \mid \rho), \notag \\
    &p(X_{ij} = l \mid \pi^{(i)}, Z_j) = 
    \sum_{k=1}^{K} \pi^{(i)}_{kl} Z_{jk}, \label{eq:soft_ds_asm}
\end{align}
where $\pi = \left\{ \pi^{(1)}, \pi^{(2)}, \ldots, \pi^{(I)} \right\}$.

When $X$ is observed, the posterior probability $\mathcal{L}$ can be transformed as follows:
\begin{align}
&\mathcal{L} = \ln p(X \mid \pi, Z) p(\pi \mid \alpha) p(Z \mid \rho) \notag \\
&= \ln 
    \left\{ \prod_{i, k} p(\pi^{(i)}_k \mid \alpha_k) \right\} 
    \left\{ \prod_j p(Z_j \mid \rho) \right\} 
    \left\{ \prod_{i, j, l} (\sum_{k=1}^{K} \pi^{(i)}_{kl} Z_{jk})^{I(X_{ij} = l)} \right\} \notag \\
&= \sum_{i, k} \ln p(\pi^{(i)}_k \mid \alpha_k) 
    + \sum_j \ln p(Z_j \mid \rho) \notag \\
&\quad + \sum_{i, j, l} I(X_{ij} = l) \ln \left(\sum_{k=1}^{K} \pi^{(i)}_{kl} Z_{jk} \right). \notag
\end{align}

The log-likelihood of the D\&S model is augmented by a prior distribution term for $\pi$, which is consistent with our findings.
While the D\&S model employs Jensen's inequality to obtain and optimize the lower bound of the log-likelihood, the Soft D\&S model directly maximizes the posterior probability via alternate optimization.
Algorithm~\ref{alg:SoftDS} illustrates this process.
In Algorithm~\ref{alg:SoftDS}, we numerically update $\pi^{(i)}$ by fixing $Z$ and computing the gradient of $\pi^{(i)}$ for $\mathcal{L}$.
Similarly, we numerically update $Z_j$ in the same manner with a fixed $\pi$.
Notably, the update is analytically derived in D\&S, but not in Soft D\&S, resulting in longer execution times.

\begin{algorithm}
\caption{Soft D\&S}
\label{alg:SoftDS}
\begin{algorithmic}[1]
\STATE Initialize $Z$ by majority voting.
\REPEAT
    \STATE $\pi^{(i)} \leftarrow \argmax_{\pi^{(i)}} \mathcal{L}$
    \STATE $Z_j \leftarrow \argmax_{Z_j} \mathcal{L}$
\UNTIL{Convergence.}
\end{algorithmic}
\end{algorithm}

\subsection{Fairness Options}
We present an approach that addresses the issue of fairness in opinion aggregation tasks, particularly in cases in which disagreement is present, and the task lacks an objectively true label. 
Biases in voter attributes such as gender and race may affect the labels attached to such tasks and result in varying estimates of opinion aggregate results based on the composition of the voter population. 
While a balanced group of voters is often preferred, the presence of attribute imbalances in crowdsourcing platforms and the large number of tasks makes assigning such groups for all tasks relatively challenging. 
To address this, we propose three fairness options for estimating the aggregate results of a balanced group from data $X$, despite unbalanced voter demographics.

\subsubsection{Sample Weighting}
Sample weighting is a widely used technique in classification problems with class imbalances. 
However, we adopt this technique to address imbalances in the distribution of voter attribute; it can be applied to all of the MV, D\&S, and Soft D\&S models. 
To implement sample weighting, we first determine the proportion of voter attributes among all labels attached to task $j$ and weight the labels with minority attributes higher and those with majority attributes lower. 
In particular, the weight $w_{ij}$ assigned to each label $X_{ij}$ is calculated as follows: 
\[
w_{ij} = \frac{ p(a_i) \sum_{i'=1}^I I(X_{i' j} \neq -1)}{ \sum_{i'=1}^I I(X_{i' j} \neq -1 \land a_{i'} = a_i) },
\]
where $ \sum_{i'=1}^I I(X_{i' j} \neq -1)$ is the total number of labels attached to task $j$ and $\sum_{i'=1}^I I(X_{i' j} \neq -1 \land a_{i'} = a_i)$ is the total number of labels provided by voters whose voter attribute is $a_i$.

We demonstrate a desirable property of combining MV and sample weighting, known as weighted majority voting. 
\begin{proposition}
Assuming that each task $j$ has a distinct true soft label $Z_j^{(a)}$ for each voter attribute and that the label $X_{ij}$ follows a categorical distribution with $Z_j^{(a)}$ as the parameter, the estimate of MV with sample weighting is unbiased.
\end{proposition}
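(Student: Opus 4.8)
The plan is to compute the expectation of the weighted-MV estimate in closed form and show it equals the fair aggregate $\sum_a p(a) Z_{jk}^{(a)}$, which by the definition in the problem setting is exactly the true soft label $Z_{jk}$ (the proportion MV returns when a large voter pool is distributed according to $p(a)$ and each attribute group $a$ votes from $Z_j^{(a)}$). So I first make the target explicit: \emph{unbiased} means $\mathbb{E}[\hat{Z}_{jk}] = \sum_a p(a) Z_{jk}^{(a)}$ for every task $j$ and class $k$.

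Next I would write the weighted estimate as
\[
\hat{Z}_{jk} = \frac{\sum_i w_{ij}\, I(X_{ij}=k)}{\sum_i w_{ij}\, I(X_{ij}\neq -1)},
\]
substitute $w_{ij} = p(a_i) N_j / N_j^{(a_i)}$ with the shorthands $N_j = \sum_{i'} I(X_{i'j}\neq -1)$ and $N_j^{(a)} = \sum_{i'} I(X_{i'j}\neq -1 \land a_{i'}=a)$, and group the sums by attribute value. The denominator collapses cleanly: within each group the factor $N_j^{(a)}$ cancels, leaving $\sum_a p(a) N_j = N_j$. Writing $M_{jk}^{(a)} = \sum_{i: a_i=a} I(X_{ij}=k)$, the numerator becomes $\sum_a (p(a) N_j / N_j^{(a)}) M_{jk}^{(a)}$, so the estimate reduces to the $p(a)$-weighted average of per-group empirical proportions, $\hat{Z}_{jk} = \sum_a p(a)\, M_{jk}^{(a)}/N_j^{(a)}$.

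With this closed form I would take expectations while conditioning on the observation pattern, so that each group count $N_j^{(a)}$ is fixed. By linearity and the categorical assumption $X_{ij}\sim\mathrm{Categorical}(Z_j^{(a_i)})$, each indicator contributes $\mathbb{E}[I(X_{ij}=k)] = Z_{jk}^{(a)}$, giving $\mathbb{E}[M_{jk}^{(a)}] = N_j^{(a)} Z_{jk}^{(a)}$ and hence
\[
\mathbb{E}[\hat{Z}_{jk}] = \sum_a p(a)\, \frac{N_j^{(a)} Z_{jk}^{(a)}}{N_j^{(a)}} = \sum_a p(a) Z_{jk}^{(a)} = Z_{jk},
\]
as required.

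The main obstacle is conceptual rather than computational: pinning down the conditioning. Because $\hat{Z}_{jk}$ divides by the random group count $N_j^{(a)}$, one cannot naively move the expectation through the ratio, and the argument works only because the weighting makes the ratio collapse to $M_{jk}^{(a)}/N_j^{(a)}$ whose denominator matches the number of summands; conditioning on who voted (the $O_{ij}$ pattern) then removes the ratio-of-random-variables difficulty. I would flag that this needs the mild assumptions that label presence is independent of label value (non-informative missingness) and that $p(a)$ is supported on the attributes actually present for task $j$ (so no division by $N_j^{(a)}=0$); under these the conditional identity yields the unconditional one by the tower property.
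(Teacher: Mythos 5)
Your proof is correct and follows essentially the same route as the paper: group the weighted sum by attribute value, cancel the per-group counts $N_j^{(a)}$ against the weights, and apply $\mathbb{E}[I(X_{ij}=k)] = Z_{jk}^{(a_i)}$ to obtain $\sum_a p(a) Z_{jk}^{(a)}$. Your explicit conditioning on the observation pattern and the caveats about non-informative missingness and $N_j^{(a)}>0$ are sensible clarifications of steps the paper leaves implicit, not a different argument.
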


\begin{proof}
Let us denote the proportion of voter attributes observed for task $j$ as follows: 
\begin{align*}
    q_j^{(0)} &= \frac{\sum_i I(X_{ij} \neq -1 \land a_i = 0)}{\sum_i I(X_{ij} \neq -1)}, \\
    q_j^{(1)} &= \frac{\sum_i I(X_{ij} \neq -1 \land a_i = 1)}{\sum_i I(X_{ij} \neq -1)}.
\end{align*}
The sample weighted majority voting estimate is
\[
\hat{Z}_{jk} = \frac{\sum_i w_{ij} I(X_{ij} = k)}{\sum_i I(X_{ij} \neq -1)}.
\]
Using the above equation, we obtain the expected value of $\hat{Z}_{jk}$ for $X$ as follows: 
\begin{align*}
E[\hat{Z}_{jk}] &= \frac{ \sum_i w_{ij} E\left[ I(X_{ij} = k) \right] }{\sum_i I(X_{ij} \neq -1)} \\
&= \frac{p(a=0) \left( \sum_{i: a_i = 0} I(X_{ij} \neq -1) Z_j^{(0)} \right) }{q_j^{(0)} \sum_i I(X_{ij} \neq -1)}  \\
&\quad + \frac{p(a=1) \left( \sum_{i: a_i = 1} I(X_{ij} \neq -1) Z_j^{(1)} \right) }{q_j^{(1)} \sum_i I(X_{ij} \neq -1)}  \\
&= p(a=0) Z_j^{(0)} + p(a=1) Z_j^{(1)}.
\end{align*}
This expected value is independent of the observed proportion of voter attributes $q_j^{(0)}, q_j^{(1)}$, and consistent with the expected value of the MV estimate by the label of the balanced voter group. 
\end{proof}
While the D\&S and Soft D\&S models do not exhibit the same unbiasedness as the weighted majority voting, we expect that fairness can still be improved through the use of sample weighting, as demonstrated in MV.

\subsubsection{Data Splitting}
Data splitting is a technique used to split an observed label $X$ into two parts based on voter attributes prior to aggregation. 
Let $I^{(0)}$ denote the number of voters with $a = 0$ and $I^{(1)}$ the number of voters with $a = 1$. 
Using data splitting, we split the original observed label $X$ into $X^{(0)} \in \mathbb{R}^{I^{(0)} \times J}$, which contains only labels from voters with $a = 0$, and $X^{(1)} \in \mathbb{R}^{I^{(1)} \times J}$, which contains only labels from voters with $a = 1$. 
We then input each of $X^{(0)}$ and $X^{(1)}$ into the opinion aggregation model to obtain two estimates for each task $j$: $\hat{Z}_j^{(0)}$ and $\hat{Z}_j^{(1)}$. 
Finally, we compute the final estimate $\hat{Z}_j$ as \[\hat{Z}_j = p(a=0) \hat{Z}_j^{(0)} + p(a=1) \hat{Z}_j^{(1)}.\]
Data splitting is consistent with sample weighting in MV, but produces different estimates in the D\&S and Soft D\&S models.

\subsubsection{GroupAnno} \label{subsubsec:groupanno}
\begin{figure}
    \centering
    \includegraphics[width=0.7\linewidth, bb=0 118 431 360, clip=true]{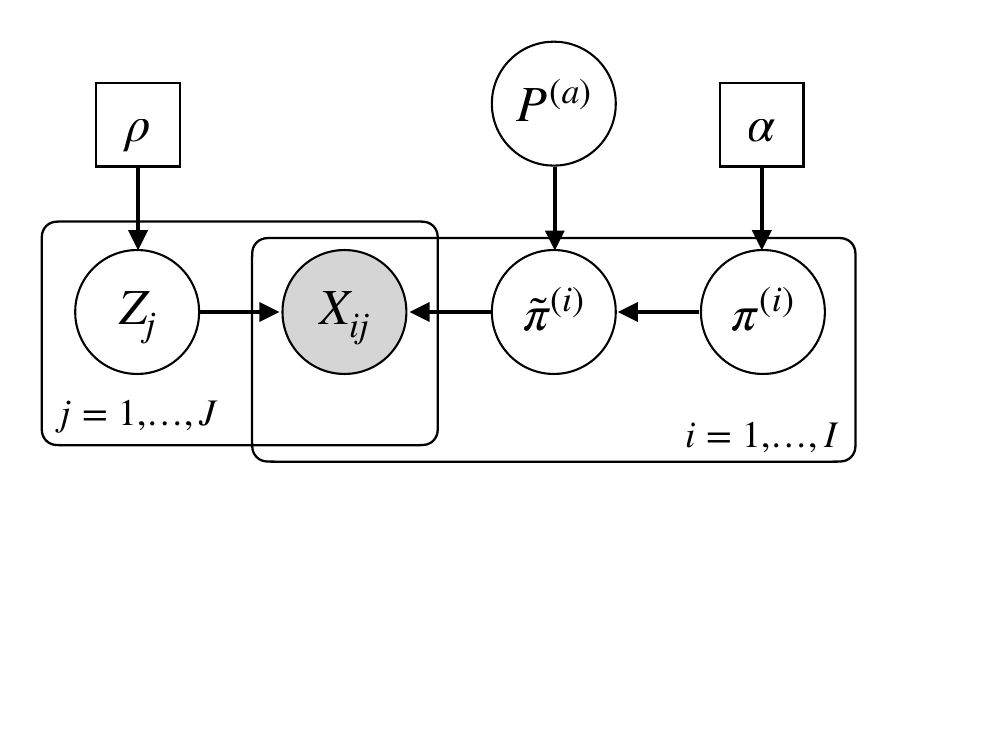}
    \caption{Graphical model of GroupAnno in the Soft D\&S model. Only shaded variables are observed, and variables surrounded by squares are hyperparameters.}
    \label{fig:graphical_groupanno}
\end{figure}
GroupAnno~\cite{Liu2022-vk} is a technique that can be used to address fairness concerns in D\&S-based models that use confusion matrices to model voters. 
Originally developed to solve the cold-start problem of estimating confusion matrices for voters with low response rates, GroupAnno decomposes the confusion matrix of voter ability into a factor for individual voters and a factor based on voter attributes. 
Let $\{\pi^{(1)}, \pi^{(2)}, \ldots, \pi^{(I)}\}$ be the confusion matrix parameter for each voter and $\{P^{(0)}, P^{(1)}\}$ be the parameter for each voter attribute, then the confusion matrix $\tilde{\pi}_i$ for voter $i$ is expressed as follows: 
\[
\tilde{\pi}^{(i)} = \frac{1}{2} \left( \pi^{(i)} + P^{(a_i)} \right).
\]
This decomposition allows the bias of opinions by voter attribute to be represented by $P^{(a)}$, which can help improve fairness. 
The graphical model of the Soft D\&S model combined with GroupAnno is shown in Figure \ref{fig:graphical_groupanno}.

In the model using GroupAnno, there are two possible options for the aggregated results to be used as output:
\begin{enumerate}
    \item After optimizing to convergence using $\tilde{\pi}^{(i)}$, 
    we use $q(T_j = k)$ in D\&S or $Z_j$ in the proposed model as the soft labels as usual. 
    Because $P^{(a)}$ can express the voter attribute bias, $q(T_j = k)$ or $Z_j$ is expected to be unaffected by voter attribute bias.
    \item We similarly optimize until convergence using $\tilde{\pi}^{(i)}$. We then optimize once for $q(T_j = k)$ in D\&S (i.e. run E-step once) or $Z_j$ in Soft D\&S, using sample weighting with the confusion matrix of voter $i$ as $P^{(a_i)}$. The results are used as soft labels. This is expected to improve fairness since $P^{(a)}$ at convergence is taken to represent the average confusion matrix for each voter attribute.
\end{enumerate}
The fairness of these methods was verified through the experiments described below.
\section{EXPERIMENTS} \label{sec:experiments}
In this section, we present experiments conducted to evaluate the accuracy and fairness of the opinion aggregation models and the fairness options. 
In the first experiment, we assessed the accuracy of the soft label estimation of the opinion aggregation models without considering voter attributes, using synthetic data. 
The subsequent experiment tested the fairness of the opinion aggregation model and the fairness option pair using synthetic and semi-synthetic data.

\subsection{Soft Label Estimation Experiment} \label{subsec:exp_soft}
In Section \ref{subsec:agg_models}, we addressed the issue that the soft label estimates of the D\&S model are extremely sharp and therefore proposed a new Soft D\&S model.
We evaluated the accuracy of six opinion aggregation models, including MV, D\&S, Soft D\&S, IBCC, EBCC, and BWA, using synthetic data.
We measured the mean absolute error (MAE) between the true $Z$ and the estimate from the opinion aggregation model. 
The experimental setup is described as follows.
\begin{itemize}
    \item Labels were generated using the label generation process of the Soft D\&S model (Figure \ref{fig:graphical_SoftDS}).
    \item We set $K=2$ classes, the number of voters $I$ to 1,000, and the number of tasks $J$ to 100.
    \item Labels were observed for arbitrary $i, j$ pairs, i.e., $\forall i, j\,(X_{ij} \neq -1).$
    \item For the diagonal component of $\pi^{(i)}$, $\pi^{(i)}_{11}, \pi^{(i)}_{22} \sim \mathrm{Beta}(18, 2).$
    \item For the remaining components of $\pi^{(i)}$, $\pi^{(i)}_{12} = 1 - \pi^{(i)}_{11},\, \pi^{(i)}_{21} = 1 - \pi^{(i)}_{22}.$
    \item $Z_{j1} \sim \mathrm{Beta}(10, 10),\, Z_{j2} = 1 - Z_{j1}.$
    \item $X_{ij} \sim \mathrm{Categorical}({\pi^{(i)}}^\top Z_j).$
\end{itemize}
We used our implementation for MV, D\&S, and Soft D\&S, and the implementations by Li et al.\footnote{\url{https://github.com/yuan-li/truth-inference-at-scale}} for IBCC, EBCC, and BWA. 
The L-BFGS-B algorithm, a boundary-conditional optimization method implemented in the SciPy scientific computing library, was used to update $\pi, Z$ of the Soft D\&S model. 
The hyperparameters $\alpha, \rho$ of the Dirichlet prior distribution were set as
\[
\alpha = 
\begin{bmatrix}
    4 & 1 \\
    1 & 4
\end{bmatrix}
,\quad \rho = (1,1).
\]

\begin{table}[tb]
    \centering
    \caption{Results of soft label estimation using synthetic data.}
    \begin{tabular}{lcr}
        \toprule
        Model & D\&S-based & MAE \\
        \midrule
        MV & & 0.021 \\
        BWA & &  0.020 \\
        D\&S & $\checkmark$ &  0.414 \\
        IBCC & $\checkmark$ &  0.413 \\
        EBCC & $\checkmark$ &  0.414 \\
        Soft D\&S & $\checkmark$ &  $\boldsymbol{0.016}$ \\
        \bottomrule
    \end{tabular}
    \label{tab:soft_label}
\end{table}

Table \ref{tab:soft_label} presents the results. 
Soft D\&S achieved the lowest MAE, indicating that it was the most accurate model for soft label estimation. 
In contrast, all D\&S-based models except Soft D\&S (i.e., D\&S, IBCC, and BWA) exhibited extremely large errors, which confirms the issue of the sharp output of D\&S as discussed in Section \ref{subsec:agg_models}.

\subsection{Robustness Against Spammers} \label{subsec:spam}
Before proceeding to the fairness evaluation, we examine the robustness of the models against spammers.
D\&S-based models are generally robust against spammers as they model voters using confusion matrices.
In the label generation process, we assumed $K=2$ classes, with the number of tasks $J$ fixed at 1,000.
We sampled the parameter $\pi^{(i)}$ of 1,000 normal voters and the true soft labels $Z_j$ for each task.
A virtual spammer has the voter parameters fixed to 
\[
\begin{bmatrix}
    0.5 & 0.5 \\
    0.5 & 0.5
\end{bmatrix},
\]
which indicates that the spammer gives random answers.
Using the voter parameter $\pi^{(i)}$ and the true soft label $Z_j$, we sampled the label $X_{ij}$ byb $\mathrm{Categorical}({\pi^{(i)}}^\top Z_j)$, as in the previous experiment.

Figures \ref{fig:spam_all} and \ref{fig:spam_ltd} show the MAEs as the number of spammers varied from 1 to 1,000.
Figure \ref{fig:spam_all} shows that, except for the Soft D\&S model, all D\&S-based models exhibited large MAEs, similar to those in Table \ref{tab:soft_label}.
Despite their robustness to spammers, these models showed high sharpness even before spammers were added, resulting in significant MAEs.
The results, excluding D\&S, IBCC, and EBCC, are presented in Figure \ref{fig:spam_ltd}.
The MV and BWA models have increased MAEs with the addition of spammers, whereas the Soft D\&S model has a relatively small increase in error.
The Soft D\&S model is robust to spammers because spammers can be represented by the voter parameter $\pi^{(i)}$.

\begin{figure}
    \centering
    \includegraphics[width=0.95\linewidth]{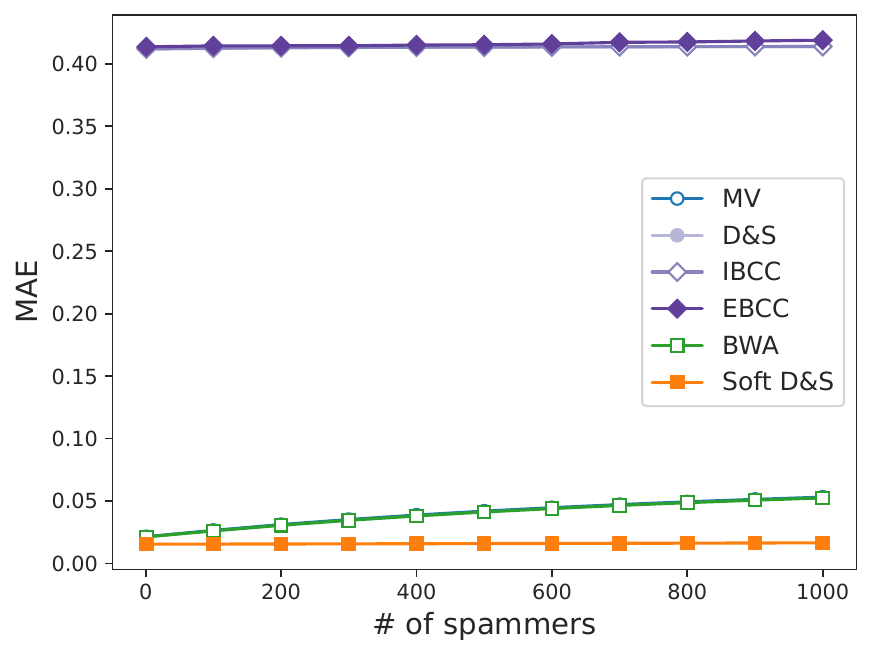}
    \caption{Results of a synthetic experiment with added spammers. The horizontal axis represents the number of spammers, while the vertical axis depicts the MAE between the true soft labels and the estimated values.}
    \label{fig:spam_all}
\end{figure}

\begin{figure}
    \centering
    \includegraphics[width=0.95\linewidth]{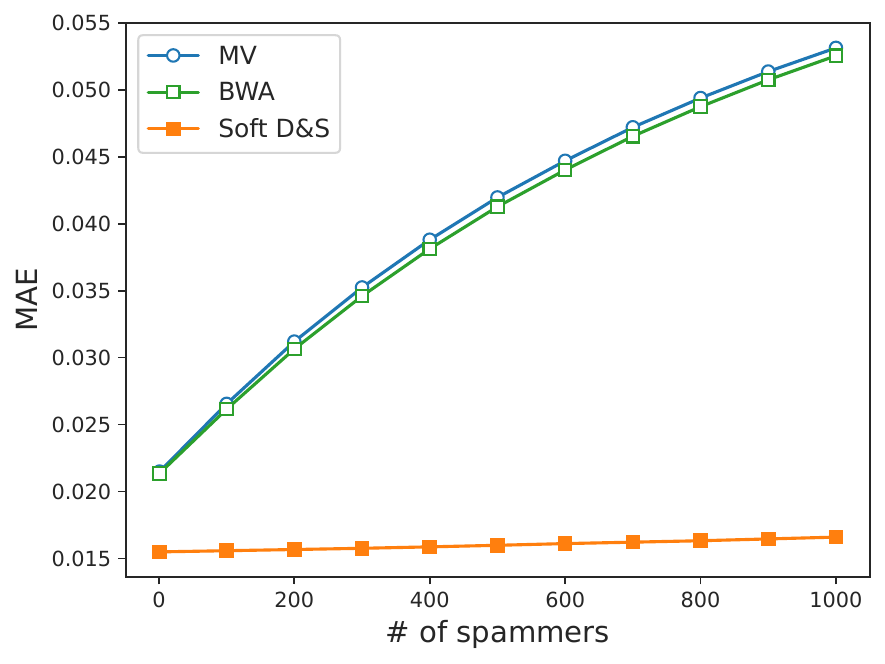}
    \caption{Results from Figure \ref{fig:spam_all} with the D\&S-based model, which exhibited large errors, excluded.}
    \label{fig:spam_ltd}
\end{figure}

\subsection{Experiments on Fairness Using Synthetic Data} \label{subsec:exp_fair_synthetic}
We assess the fairness of the aggregation results for various opinion aggregation models and fairness options. 
The synthetic data utilized in the experiment were generated based on the label generation process of the Soft D\&S model and GroupAnno (as illustrated in Figure \ref{fig:graphical_groupanno}). 
We assume $K=2$ classes and all labels were observed, where each voter $i$ has a binary voter attribute $a_i \in {0, 1}$.
Because the labels were generated according to GroupAnno, we obtained a parameter $\pi^{(i)}$ for each voter, a parameter $P$ for each voter attribute, and a true soft label $Z_j$ for each task. 
The $\pi^{(i)}, Z_j$ were sampled as in Section \ref{subsec:exp_soft}, and the parameter $P$ per voter attribute was set as 
\[
P^{(0)} = \begin{bmatrix}
    1 & 0 \\
    1 & 0 
\end{bmatrix}, \quad
P^{(1)} = \begin{bmatrix}
    0 & 1 \\ 
    0 & 1 
\end{bmatrix}.
\]
As introduced in Section \ref{subsubsec:groupanno}, we used $\pi^{(i)}, P$ for voter $i$ with $\tilde{\pi}^{(i)} = \frac{1}{2} \left( \pi^{(i)} + P^{(a_i)} \right)$, and label $X_{ij}$ was sampled according to $\mathrm{Categorical}(\tilde{\pi}^{{(i)}^\top} Z_j)$.

We utilized the synthetic data to assess the MAE with the true soft label for each combination of opinion aggregation models and fairness options. 
However, implementing the fairness options for IBCC, EBCC, and BWA is not straightforward and will require future consideration. 
Therefore, we present the results for these models without the fairness option for comparison.
The hyperparameters $\alpha, \rho$ of the Dirichlet prior distribution were set as
\[
\alpha = 
\begin{bmatrix}
    1 & 1 \\
    1 & 1
\end{bmatrix}
,\quad \rho = (1,1).
\]
Although the overall experimental results are shown in Figure \ref{fig:synthetic_fairness_large} in the Appendix, we particularly focus on the setting with 200 voters for attribute 0 and 400 voters for attribute 1, which are depicted in Figures \ref{fig:synthetic_fairness_all} and \ref{fig:synthetic_fairness_ltd}.
Figure \ref{fig:synthetic_fairness_all} illustrates that, consistent with the previous experiment, the D\&S-based models, with the exception of the Soft D\&S model, exhibited MAEs when utilizing impartial soft labels. 
Figure \ref{fig:synthetic_fairness_ltd} presents the results excluding these models. 
Both pairs of Soft D\&S and two different GroupAnno, which are not easily discernible due to overlapping data points, showed nearly identical MAEs compared to the pair of Soft D\&S with no fairness option. 
Despite the pairwise generative process of Soft D\&S and GroupAnno, the estimation of $\pi$ and $P$ was unstable for GroupAnno, with data splitting proving to be the best fairness option for Soft D\&S. 
Interestingly, the MAE for weighted MV was the smallest when the number of tasks was as small as 100, whereas the MAE for the Soft D\&S and data splitting pair was the smallest when the number of tasks is sufficiently large (150 or more). 
In contrast to weighted MV, the Soft D\&S model has a voter parameter $\pi$, which leads to improved MAE as the number of tasks increases owing to the accuracy of the  estimation of $\pi$. 
Note that weighted MV achieved the best accuracy when the number of voters was small (as shown in Figure \ref{fig:synthetic_fairness_large}). 
These experimental findings suggest that a sufficient number of voters and tasks are required to outperform weighted MV using the Soft D\&S model and data splitting.

\begin{figure}
    \centering
    \includegraphics[width=0.95\linewidth]{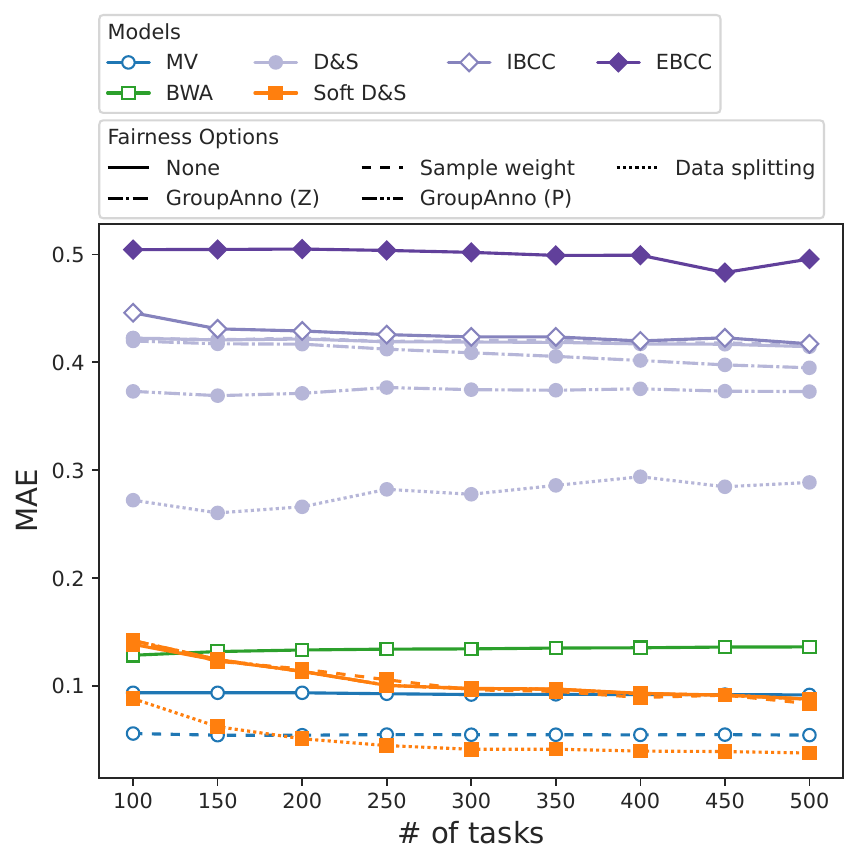}
    \caption{Results of an synthetic experiment to evaluate fairness. There are 200 voters with $a = 0$ and 400 voters with $a = 1$. The horizontal axis shows the number of tasks, and the vertical axis shows the MAEs between true soft labels and estimates.}
    \label{fig:synthetic_fairness_all}
\end{figure}

\begin{figure}
    \centering
    \includegraphics[width=0.95\linewidth]{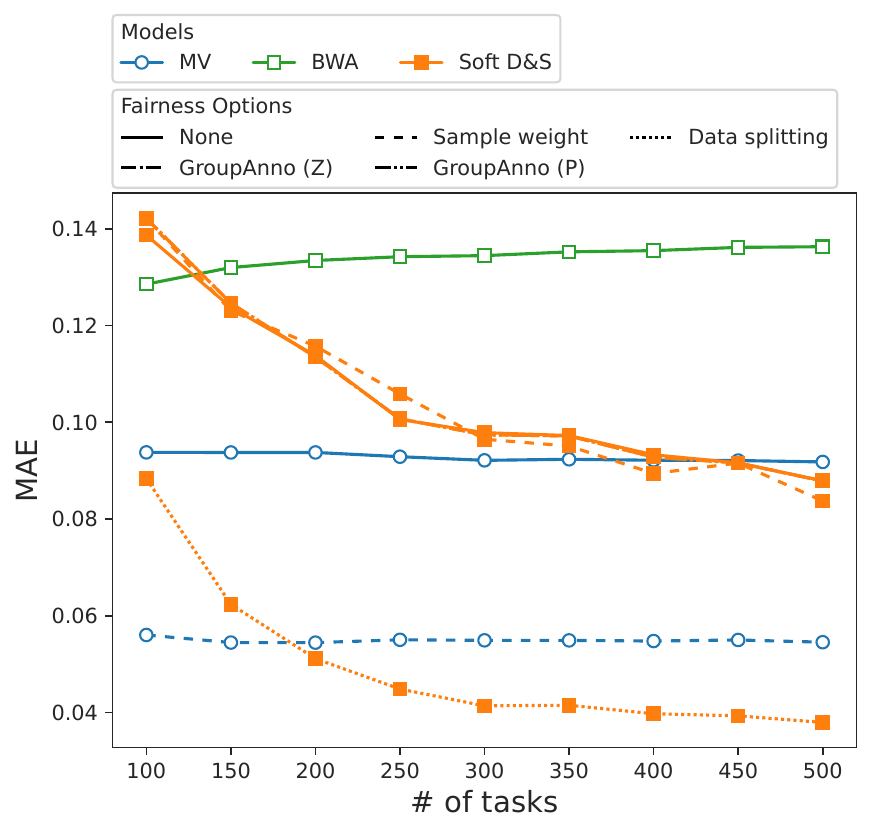}
    \caption{Results from Figure \ref{fig:synthetic_fairness_all} with the D\&S-based model, which exhibited large errors, are excluded.}
    \label{fig:synthetic_fairness_ltd}
\end{figure}

\subsection{Experiments on Fairness Using Semi-synthetic Data}
We present an experiment in which we evaluated fairness using semi-synthetic data created from the Moral Machine dataset~\cite{Awad2018-ao}. 
This dataset consists of the opinions of human voters collected on a website\footnote{\url{https://www.moralmachine.net/}} on the topic of how automated vehicles should ethically behave. 
In Moral Machine, a single task corresponds to an automated vehicle choosing which of two groups of characters, such as men, women, old people, and children should be saved in emergency.
The website also offers a survey of voter attributes such as age and gender, and some voters cooperated with this survey.

We focused on the gender of the characters in this data and addressed the two-class opinion aggregation problem of whether to save male or female characters. 
After preprocessing, we used data on 1,853 voters (including 1,072 male and 781 female voters), 326 tasks, and 18,528 labels (including 9,264 labels by male voters and 9,264 labels by female voters).

Because voter attribute bias was not found after preprocessing, we created a semi-synthetic dataset with artificially enhanced bias.
We set the flip rate $r \in [0, 1]$ and varied the observation label $X$ as follows.
\begin{itemize}
    \item We change the label such that female voters save the female character with probability $r$ and male voters save the male character. However, the label could be the same as the original label.
    \item With probability $1-r$, the label is not changed from the original label.
\end{itemize}
Increasing the flip rate strengthens the voter attribute bias, particularly at $r=1$, where all female voters save the female character and all male voters save the male character.

This semi-synthetic dataset was used to test fairness for the combination of the opinion aggregation model and the fairness option.
The dataset was balanced, with equal numbers of male and female voter labels.
We sampled the labels of male or female voters in this dataset to create an unbalanced dataset for voter attributes.
The soft labels of MV in the balanced dataset were taken as the true soft labels, and compared to the soft labels of each opinion aggregation model in the unbalanced dataset.

We evaluated the fairness of opinion aggregation models using two metrics: MAE and bias. 
As the Moral Machine dataset considers a binary classification task, we calculated the MAE and bias by focusing on the soft label for the ``save the male character'' class (let us call this class 1). 
Let $Z_{j} \in [0,1]^2$ denote the soft label obtained from MV on a balanced dataset for task $j$, and let $\hat{Z}_{j} \in [0,1]^2$ denote the soft label obtained from an opinion integration model on an unbalanced dataset. 
The bias is defined as
$
\frac{1}{J} \sum_{j=1}^J \left( \hat{Z}_{j1} - Z_{j1} \right).
$
The degree of fairness is indicated by the proximity of the bias to zero.

Figures \ref{fig:mm_flip_MAE} and \ref{fig:mm_flip_Bias} show the results.
Figure \ref{fig:mm_flip_MAE} demonstrates the MAE with soft labels for balanced datasets.
The results show that weighted MV yielded the smallest MAE throughout the entire range of flip rates followed by the pair of Soft D\&S and data splitting. 
The MAEs for simple MV and the pairs of Soft D\&S models with fairness options other than data splitting increased MAE as the flip rate increases, indicating that weighted MV and the pair of Soft D\&S and data splitting were effective in improving fairness. 
The superior performance of weighted MV over the pair Soft D\&S and data splitting can be attributed to the fact that Soft D\&S has individual parameters for each voter, which demand a sufficient amount of data.
Furthermore, the opinion aggregation models based on the D\&S model, with the exception of the Soft D\&S model exhibited larger MAEs, as in the previous experiments.

Figure \ref{fig:mm_flip_Bias} illustrates the bias of the models, where a positive bias indicates that the soft labels are skewed toward saving male characters, compared to the balanced dataset. 
As the flip rate increased, the biases of several opinion aggregation models and fairness options deviated significantly from zero, whereas the biases of the weighted MV and the Soft D\&S with data splitting pairs remained close to zero. 
Based on these results, weighted MV and the Soft D\&S with data splitting pair may be fairer opinion aggregation methods.

\begin{figure}
    \centering
    \includegraphics[width=0.9\linewidth]{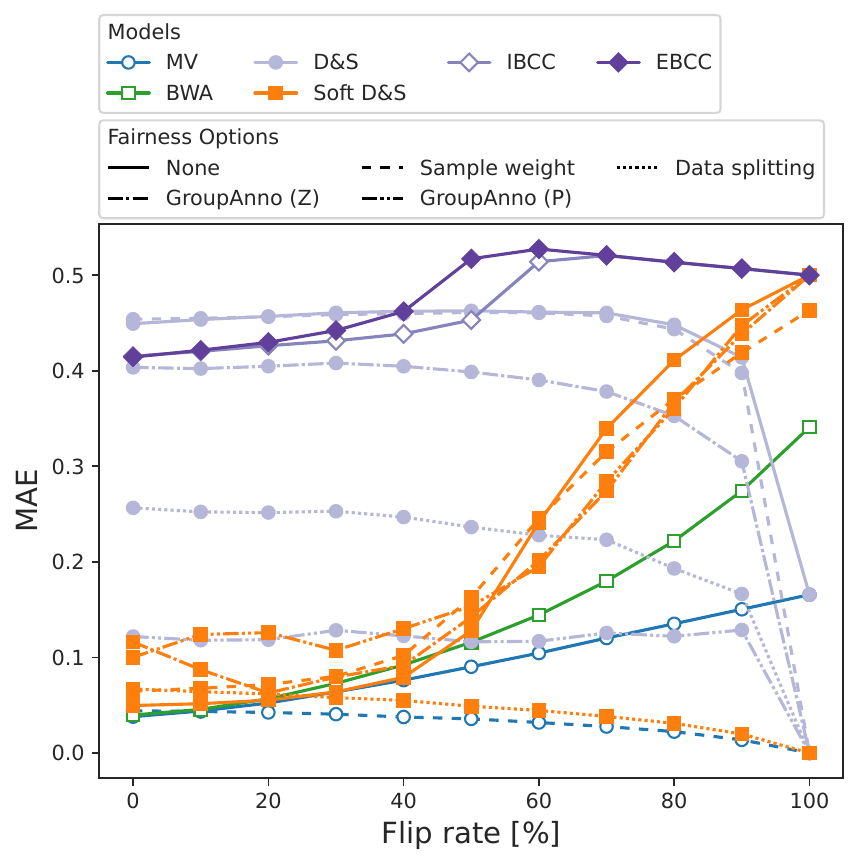}
    \caption{
    MAE results for the semi-synthetic data designed to evaluate fairness. 
    As the flip rate (horizontal axis) increased, the strength of voter attribute bias increased. 
    The MAE (vertical axis) was calculated as the difference between the aggregate results of a dataset in which the number of female voters was reduced by 50\% through sampling from the balanced data and the MV results of the balanced data.}
    \label{fig:mm_flip_MAE}
\end{figure}

\begin{figure}
    \centering
    \includegraphics[width=0.9\linewidth]{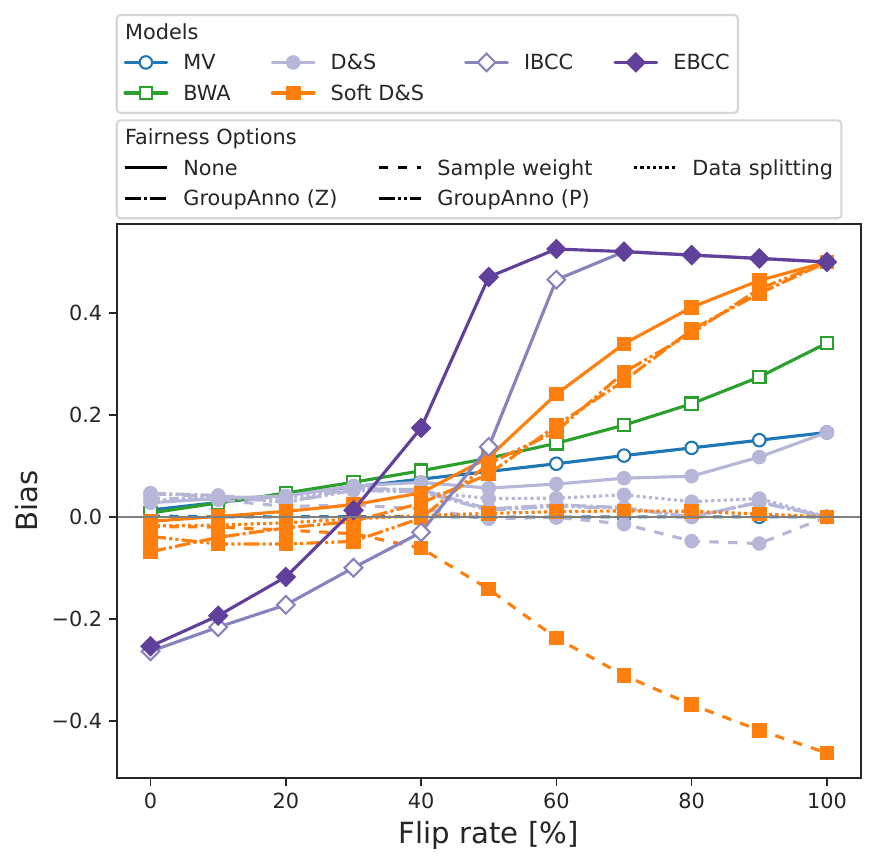}
    \caption{Bias results of the same settings as in Figure \ref{fig:mm_flip_MAE}. Bias values closer to zero  indicate fairer results.}
    \label{fig:mm_flip_Bias}
\end{figure}

\section{CONCLUSION} \label{sec:conclusion}
This study aimed to attain fair opinion aggregation concerning voter attributes and evaluate the fairness of the aggregated results. 
We utilized an approach that combined various opinion aggregation models with fairness options. 
As we discovered issues with the D\&S model producing sharp output, we have proposed a new Soft D\&S model that improves the accuracy of soft label estimation. 
The fairness of the opinion aggregation models (MV, D\&S, and Soft D\&S), along with three fairness options (sample weighting, data splitting, and GroupAnno), were assessed through experiments. 
The experimental results indicate that the combination of Soft D\&S and data splitting was effective for dense data in enhancing fairness, whereas weighted MV was effective for sparse data. 

This study is the first to quantitatively assess fairness in opinion aggregation concerning voter attributes. 
We have also proposed a technique that balances the opinions of majority and minority attributes across all voters. 
However, a major limitation of this work is that we have only considered a single binary voter attribute. 
Future research should address more complex voter attributes such as multi-class and continuous-value attributes as well as multiple voter attributes.

\begin{acks}
This work was supported by JST PRESTO JPMJPR20C5 and JST CREST JPMJCR21D1.
\end{acks}

\clearpage
\bibliographystyle{ACM-Reference-Format}
\balance
\bibliography{references}

\clearpage
\balance
\begin{appendices}
\section{Preprocessing Procedure}
In Moral Machine~\cite{Awad2018-ao}, a single task corresponds to an automated vehicle choosing which of two groups of characters should be saved in emergency.
The set of characters is $\{${\it Man, Woman, Pregnant Woman, Baby in Stroller, Elderly Man, Elderly Woman, Boy, Girl, Homeless Person, Large Woman, Large Man, Criminal, Male Executive, Female Executive, Female Athlete, Male Athlete, Female Doctor, Male Doctor, Dog, Cat}$\}$.
Each session comprises 13 consecutive tasks and includes an optional survey of voter attributes. 
Two tasks during a single session test whether voters save the male or female character.
We focused on these tasks and preprocessed the original dataset\footnote{\url{https://goo.gl/JXRrBP}} by following the steps below.
\begin{enumerate}
    \item To use only labels with voter attributes, we extracted the following voters.
    \begin{itemize}
        \item Voters who responded to the survey in only a single session.
        \item Voters who responded to the survey in more than one session, but responded to the default value in all but a single session.
        \item Voters who responded to the survey in more than one session and gave the same answers each time.
    \end{itemize}
    \item We extracted voters whose gender is male or female to address binary voter attributes.
    \item We assigned task IDs based only on the type and number of characters, ignoring differences in Intervention, Barrier, and CrossingSignal (details on these columns in the dataset are provided in Supplementary Information in~\cite{Awad2018-ao}).
    \item We created the observation label $X$ with the case of saving male characters as class 1 and the case of saving female characters as class 2.
    \item To exclude voters and tasks with a low number of labels and to align the number of labels by voter gender, we extracted female voters with at least 10 labels from $X$.
    \item We extracted only tasks labeled 10 or more from the aforementioned female voters.
    \item For each of the aforementioned tasks, we extracted labels from an equal number of male and female voters. However, male voters were selected in order of the number of labels.
\end{enumerate}

\begin{figure*}
    \centering
    \includegraphics[width=0.99\linewidth]{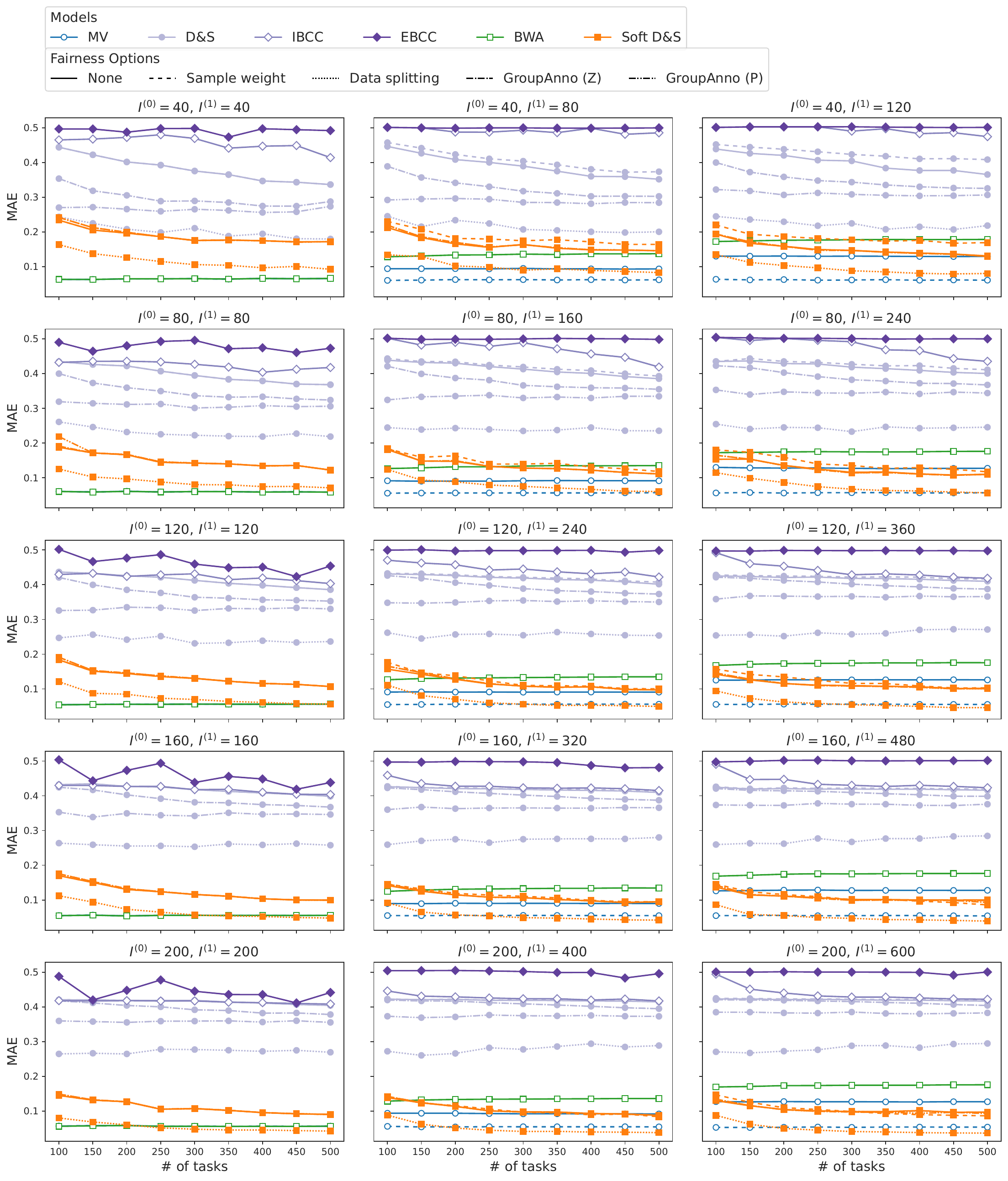}
    \caption{
    Results of all experiments in Section \ref{subsec:exp_fair_synthetic}.
    Let $I^{(0)}$ be the number of voters with $a=0$ and $I^{(1)}$ be the number of voters with $a=1$.
    We set $I^{(0)}$ to 40-200 voters, $\frac{I^{(1)}}{I^{(0)}} = 1, 2, 3$, and the number of tasks $J$ to 100-500.
    }
    \label{fig:synthetic_fairness_large}
\end{figure*}
\end{appendices}

\end{document}